\documentclass[journal]{IEEEtran}
\IEEEoverridecommandlockouts                              

\usepackage[utf8]{inputenc}
\usepackage{color}
\usepackage{array}
\usepackage{float}
\usepackage{amsmath,amssymb,amsfonts}
\usepackage{verbatim}

\usepackage{graphicx}
\usepackage{booktabs}
\usepackage{balance}
\usepackage{xcolor}
\def\BibTeX{{\rm B\kern-.05em{\sc i\kern-.025em b}\kern-.08em
    T\kern-.1667em\lower.7ex\hbox{E}\kern-.125emX}}
\usepackage{cite}    
\usepackage{hyperref}
\hypersetup{colorlinks=true,linkcolor=red,citecolor=blue,urlcolor=black}  

\usepackage{lipsum}
\usepackage{mathtools}
\usepackage{cuted}

\usepackage{algorithmic}
\usepackage{longtable}

\floatstyle{ruled}
\newfloat{algorithm}{tbp}{loa}
\providecommand{\algorithmname}{Algorithm}
\floatname{algorithm}{\protect\algorithmname}

\floatstyle{ruled}
\newfloat{algorithm}{tbp}{loa}
\providecommand{\algorithmname}{Algorithm}
\floatname{algorithm}{\protect\algorithmname}

\newcommand{\MYfooter}{\smash{
		\hfil\parbox[t][\height][t]{\textwidth}{\centering
			\thepage}\hfil\hbox{}}}

\makeatletter
\let\oldforeign@language\foreign@language
\DeclareRobustCommand{\foreign@language}[1]{%
	\lowercase{\oldforeign@language{#1}}}

\let\oldforeign@language\foreign@language
\DeclareRobustCommand{\foreign@language}[1]{%
	\lowercase{\oldforeign@language{#1}}}

\ifCLASSINFOpdf
\else
\fi

\ifCLASSINFOpdf
\else
\fi

\newtheorem{thm}{Theorem}

\newtheorem{assum}{Assumption}
\ifCLASSINFOpdf
\else
\fi

\def\ps@IEEEtitlepagestyle{%
	\def\@oddhead{\parbox[t][\height][t]{\textwidth}{\centering \scriptsize
			Personal use of this material is permitted. Permission from the author(s) and/or copyright holder(s), must be obtained for all other uses. Please contact us and provide details if you believe this document breaches copyrights.\\
			\noindent\makebox[\linewidth]{}
		}\hfil\hbox{}}%
	\def\@evenhead{\scriptsize\thepage \hfil \leftmark\mbox{}}%
	\def\@oddfoot{\parbox[t][\height][l]{\textwidth}{
			\vspace{-20pt}{\rule{\textwidth}{0.4pt}}\\ \footnotesize{\bf{\footnotesize\textcolor{red}{H. Naser, H. A. Hashim, and M. Ahmadi, "Adaptive Gain Nonlinear Observer for External Wrench Estimation in Human-UAV Physical Interaction," In Proc. of the 2026 IEEE International Conference on Robotics \& Automation (ICRA), Vienna, Austria 2026.}}}\\
			\noindent\makebox[\linewidth]
		}\hfil\hbox{}}%
	\def\@evenfoot{\MYfooter}}

\makeatother
\begin{document}
\bstctlcite{IEEEexample:BSTcontrol}
\pagenumbering{gobble}
\title{\LARGE \bf Adaptive Gain Nonlinear Observer for External Wrench Estimation in Human-UAV Physical Interaction}
\author{Hussein N. Naser, Hashim A. Hashim, and Mojtaba Ahmadi
	\thanks{This work was supported in part by National Sciences and Engineering Research Council of Canada (NSERC), under the grants RGPIN-2022-04937 and the University of Thi-Qar, Iraqi Ministry of Higher Education and Scientific Research under financial support (No. 6608 in 21/06/2022).}
	\thanks{H. N. Naser, H. A. Hashim, and M. Ahmadi are with the Department of Mechanical and Aerospace Engineering, Carleton University, Ottawa, Ontario, K1S-5B6, Canada. Also, H. N. Naser is with the Department of Biomedical Engineering, University of Thi-Qar, Nasiriyah, Thi-Qar, 64001, Iraq. Email: HusseinNaser@cmail.carleton.ca, husseinalshami@utq.edu.iq, hhashim@carleton.ca, and Mojtaba.Ahmadi@carleton.ca}}

\maketitle


\begin{abstract}
This paper presents an Adaptive Gain Nonlinear Observer (AGNO) for estimating the external interaction wrench (forces and torques) in human-UAV physical interaction for assistive payload transportation. The proposed AGNO uses the full nonlinear dynamic model to achieve an accurate and robust wrench estimation without relying on dedicated force-torque sensors. A key feature of this approach is the explicit consideration of the non-constant inertia matrix, which is essential for aerial systems with asymmetric mass distribution or shifting payloads. A comprehensive dynamic model of a cooperative transportation system composed of two quadrotors and a shared payload is derived, and the stability of the observer is rigorously established using Lyapunov-based analysis. Simulation results validate the effectiveness of the proposed observer in enabling intuitive and safe human-UAV interaction. Comparative evaluations demonstrate that the proposed AGNO outperforms an Extended Kalman Filter (EKF) in terms of estimation root mean square errors (RMSE), particularly for torque estimation under nonlinear interaction conditions. This approach reduces system weight and cost by eliminating additional sensing hardware, enhancing practical feasibility.
\end{abstract}

\begin{IEEEkeywords}
Nonlinear observer, Adaptive, Force-torque estimation, sensorless Wrench estimation, Lyapunov stability, Human-UAV Physical interaction, Cooperative payload transportation. 
\end{IEEEkeywords}

\section{INTRODUCTION}
Unmanned Aerial Vehicles (UAVs) are increasingly integrated into human workspaces, necessitating intuitive and safe human-UAV interaction\cite{laghari2023unmanned}. Physical interaction, where humans directly apply forces to UAVs for guidance, offers a promising paradigm for collaborative manipulation and assistive technologies\cite{naser2025aerial}. Accurate measurements or estimations of human-applied forces are fundamental for intuitive and responsive systems\cite{rajappa2017design}.

Traditionally, interaction forces are measured using dedicated force-torque sensors \cite{mariotti2019admittance}. However, these sensors add considerable weight that reduces payload capacity and flight endurance, especially for small to medium-sized UAVs \cite{naser2026QUKF}, increase system complexity and cost due to calibration and protection requirements \cite{ruggiero2018aerial}, and limit interaction flexibility while remaining vulnerable to damage during collisions \cite{tomic2014unified}. These limitations motivate research into estimating interaction forces without direct measurement \cite{tomic2014unified}. Force estimation methods utilize the system's existing state measurements and dynamic model to infer applied forces, offering a more elegant and practical solution \cite{wilmsen2019nonlinear}. Estimation approaches eliminate additional sensors, reducing weight and complexity, and allow interaction at any point on the UAV. They depend on existing navigation sensors (IMU, position tracking) and can be implemented as software modules, making them adaptable and cost-effective \cite{naser2025aerial}.

\subsection{Related Work}
Physical human-robot interaction has been extensively studied for manipulators and humanoid robots, with impedance control being a foundational framework \cite{hogan1984impedance}. Extending these paradigms to aerial vehicles presents unique challenges due to their underactuated nature, inherent instability, and safety concerns \cite{rajappa2017control}. Early work demonstrated physical guidance of the quadrotors \cite{augugliaro2013admittance}. Recent advances focus on intuitive and safe interaction, expanding applications to collaborative payload transportation and aerial manipulation \cite{naser2025aerial,rajappa2017design,lee2018integrated}.

Given the limitations of direct force sensing, researchers have explored various estimation methods. Kalman filter-based approaches, including linear Kalman filters and their nonlinear variants, have been applied for force estimation; however, they face challenges with the nonlinearity of the UAV, linearization errors, increased computational complexity, and the need for careful tuning of noise covariance matrices \cite{skrede2024linear, yin2023error, banks2021physical}. Concurrently, data-driven machine learning approaches have emerged for force estimation, capable of handling complex nonlinearities; however, they demand extensive training data, exhibit limited generalization, and impose high computational demands on small UAVs. In addition, their "black box" nature complicates theoretical guarantees on performance and stability \cite{alharbat2025external, kruvzic2021end, dai2019learning}. In contrast, nonlinear observer approaches directly incorporate system nonlinear dynamics without approximation, demonstrating superior performance during rapid force changes and improved robustness to model uncertainties. These observers handle nonlinearity naturally, adapt better to uncertainties, and provide robust and accurate estimates without heavy dependence on noise covariance tuning. This makes them well suited for estimating external interaction wrenches in dynamic human-UAV interaction scenarios for cooperative payload transportation \cite{wilmsen2019nonlinear, veil2021nonlinear, yuksel2014nonlinear, nikoobin2009lyapunov, hashim2023observer}.

Inspired by the advantages of nonlinear observers, this paper proposes an Adaptive Gain Nonlinear Observer (AGNO) to estimate the external interaction wrench for assistive cooperative aerial payload transportation. This work aims to advance human-UAV physical interaction, enabling more intuitive, safe, and accessible interaction paradigms without specialized force sensing hardware.

\subsection{Contributions}
This paper makes several significant contributions to the field of human-UAV physical interaction:
\begin{enumerate}
    \item A detailed dynamic model derivation for an assistive cooperative aerial payload transportation system, consisting of two quadrotors rigidly attached to a beam-shaped payload.
    \item An adaptive gain nonlinear observer (AGNO) is proposed for estimating external interaction forces and torques.
    \item An explicit analysis of a non-constant inertial matrix is provided, critical for aerial systems with asymmetric mass distribution or shifting payloads.
    \item A rigorous stability analysis using Lyapunov theory establishes guarantees on the observer's convergence and robustness properties.
    \item A benchmark validation of the proposed AGNO is benchmarked against an EKF, demonstrating a lower RMSE and a superior torque estimation under nonlinear interaction conditions.
\end{enumerate}
This paper is structured as follows: Section \ref{sec:methodology} details the system modeling and control allocation and strategy. Section \ref{sec:observer_design} introduces the proposed nonlinear observer design. Section \ref{sec:stability_analysis} presents a comprehensive stability analysis, with a specific focus on non-constant inertial matrices. Section \ref{sec:simulation_results} discusses the simulation results. Finally, Section \ref{sec:conclusion} provides the conclusions of this work.

\section{Methodology\label{sec:methodology}}
\subsection{System Overview and Problem Formulation}
This work presents a cooperative aerial system for intuitive human-guided payload transportation. Two quadrotor UAVs are rigidly connected to a shared payload, enabling a human operator to guide the assembly through 3D space via direct physical interaction. To achieve this, we develop detailed dynamic models for each component and an integrated model for the unified system. Human interaction forces are then estimated using a nonlinear observer, which is detailed in the following section.

\begin{figure}
\centering \includegraphics[clip,width=1\linewidth]{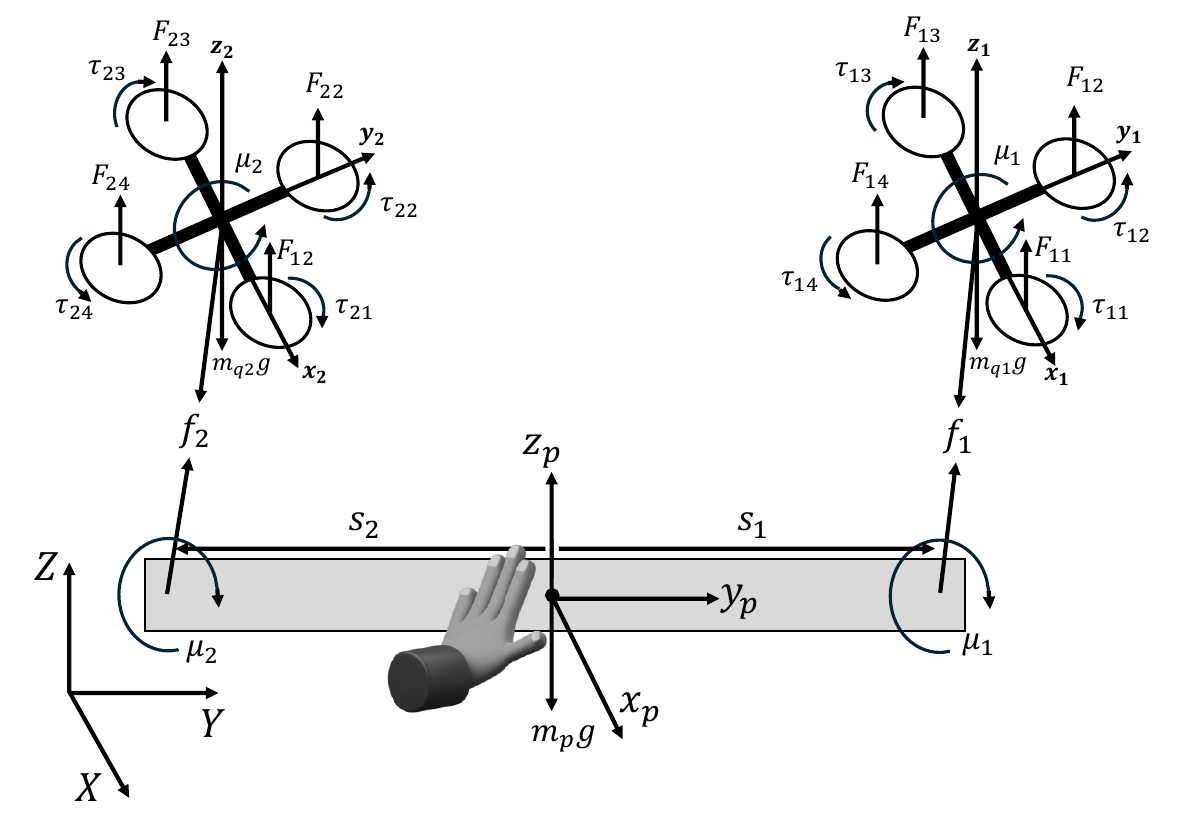}
\caption{Free body diagram of the system components.}
\label{Fig1:system_FBD} 
\end{figure}

\subsection{Reference Frames and Notation}
The reference coordinate frames to describe the dynamics of the system are defined to have a world-inertial frame that is denoted as $W_{I} = \{O, X, Y, Z\}$, while a body-fixed frame $B_{p} = \{o_{p}, x_{p},y_{p},z_{p}\}$ is attached to the payload's center of mass (CoM). Each quadrotor is also associated with its own body-fixed frame $Q_{j} = \{o_{j}, x_{j}, y_{j}, z_{j}\}$, for $j = 1, 2$. All coordinate systems are defined such that their positive $z$-axis points upward. The position of the quadrotor $j$ in the inertial frame is represented by $p_{j} = [x_{j}, y_{j}, z_{j}]^{\top}$. The angular velocities expressed in the body-fixed frame are given by $\omega = [p, q, r]^{\top}$, and the orientation of the system is described using Euler angles $\xi = [\phi, \theta, \psi]^{\top}$. The rotation matrix $R(\xi)$ maps vectors from the body-fixed frame to the inertial frame.

The relationship between angular rates $\omega$ and Euler angle rates $\dot{\xi}$ is given by\cite{naser2025aerial}:
\begin{equation}
\omega = \Theta(\xi)\dot{\xi}, \quad \Theta(\xi) = \begin{bmatrix} c(\theta) & 0 & -s(\theta) \\ 0 & 1 & s(\phi) \\ s(\theta) & 0 & c(\phi) c(\theta) \end{bmatrix},
\label{eq1:omega_psi_dot}
\end{equation}
where $s(\cdot)$ and $c(\cdot)$ represent $\sin(\cdot)$ and $\cos(\cdot)$, respectively.

\subsection{The Payload Dynamics}
From the Free Body Diagram of the system in \ref{Fig1:system_FBD}, the dynamic model of the payload with position $p_{p} =[x_{p},y_{p},z_{p}]^{\top}$, is:
\begin{equation}
\begin{aligned}
m_{p}\ddot{p}_{p}&=(f_{1}+f_{2})-m_{p}g\hat{e}, \\
I_{p}\dot{\omega}&=(\mu_{1}+\mu_{2}) +(s_{1}\times f_{1})+(s_{2}\times f_{2})-\omega\times I_{p}\omega,
\label{eq2:payload_dynamics}
\end{aligned}
\end{equation}
where $m_{p}$ denotes the mass of the payload, $I_{p}$ denotes the inertia matrix, $\ddot{p}_{p}$ is the translational acceleration, $\hat{e}=[0,0,1]^{\top}$ is a unit vector on $z$-axis, $g$ is the gravity, $f_{1}$, $f_{2}$, $\mu_{1}$, and $\mu_{2}$ denote the forces and torques exerted by the quadrotors on the payload, respectively, and $s_{1} $ and $s_{2}$ are position vectors from payload CoM to the quadrotors' CoM.

\subsection{The Quadrotor Dynamics}
The translational and rotational dynamics of each quadrotor according to the Newton-Euler formalism are as follows:
\begin{equation}
\begin{aligned}
m_{q_{j}}\ddot{p}_{q_{j}}&= RT_{q_{j}}\hat{e}-f_{j}-m_{q_{j}}g\hat{e}, \\
I_{q_{j}}\dot{\omega}&=\tau_{q_{j}}-\mu_{j}-\omega\times I_{q_{j}}\omega,
\label{eq3:quad_dynamics}
\end{aligned}
\end{equation}
where $m_{q_{j}}$ is the mass of the $j^{th}$ quadrotor, $I_{q_{j}}$ is the inertia matrix, $\ddot{p}_{q_{j}}$ is the translational acceleration, $\dot{\omega}$ is the angular acceleration, $f_{j}$ and $\mu_{j}$ are internal forces and torques, $
T_{q_{j}}$ and $\tau_{q_{j}}=[\tau_{j_{1}},\tau_{j_{2}},\tau_{j_{3}}]^{\top}$ denote the total thrust and the moment vector of the $j^{th}$ quadrotor, respectively. These control inputs are derived from the rotor thrusts as:
\begin{equation}
\begin{bmatrix}T_{q_{j}}\\ \tau_{q_{j}}\end{bmatrix}=\begin{bmatrix}1&1&1&1\\ 0&r&0&-r\\ -r&0&r&0\\ \varrho&-\varrho&\varrho&-\varrho\end{bmatrix}\begin{bmatrix}F_{j1}\\ F_{j2}\\ F_{j3}\\ F_{j4}\end{bmatrix},
\label{eq4:rotor_inputs}
\end{equation}
where $F_{ji}=k_{t}S_{ji}^{2}$ is the force generated by each rotor and $k_{t}$ is the thrust constant, $S_{ji}$ is the rotor angular speed, $\varrho=k_{m}/k_{t}$ is a ratio of constants, $k_{m}$ is the constant of the drag moment, which is given as $\tau_{ji}=k_{m}S_{ji}^{2}$, and $r$ is the arm length of the $j^{th}$ quadrotor.

\subsection{The Integrated System Dynamics}
Referring to the Free Body Diagram in Figure \ref{Fig1:system_FBD}, the following
assumptions are necessary for the modeling purposes of the integrated aerial payload transportation system. 
\begin{assum}
\label{assum:1} 
    The system comprises three rigid components (two identical quadrotors and one payload) rigidly connected, each with specific specifications and physical characteristics.
\end{assum}
\begin{assum}
\label{assum:2} 
    The payload is a hollow tube with mass $m_{p}$, length $L_p$, and radius $r_{p}$.
\end{assum}
\begin{assum}
\label{assum:3} 
    $x$-axis and $y$-axis in the body-fixed reference frame are the symmetry axes of the system.
\end{assum}

The dynamics model of the integrated system, with $p=[x,y,z]^{\top}$ is the position of its CoM measured in the inertial reference frame, is derived by combining the dynamics in \eqref{eq2:payload_dynamics} and \eqref{eq3:quad_dynamics}, as follows:
\begin{equation}
\begin{aligned}
m\ddot{p}&=RT\hat{e}-mg\hat{e}+f_{ex}, \\
I\dot{\omega}&=\tau-\omega\times I\omega+M_{ex},
\label{eq5:integrated_dynamics}
\end{aligned}
\end{equation}
where $m=m_{q_1}+m_{q_2}+m_{p}$ represents the total mass of the system, $I$ represents the diagonal matrix of the total moment of inertia, $f_{ex} \in \mathbb{R}^{3}$ and $M_{ex} \in \mathbb{R}^{3}$ are the force and torque of the human external interaction, respectively, and $T, \tau$ are the total thrust and moments of the system coming from the two quadrotors.

\subsection{Control Architecture}

\begin{figure}[b!]
\centering
\includegraphics[width=1\columnwidth]{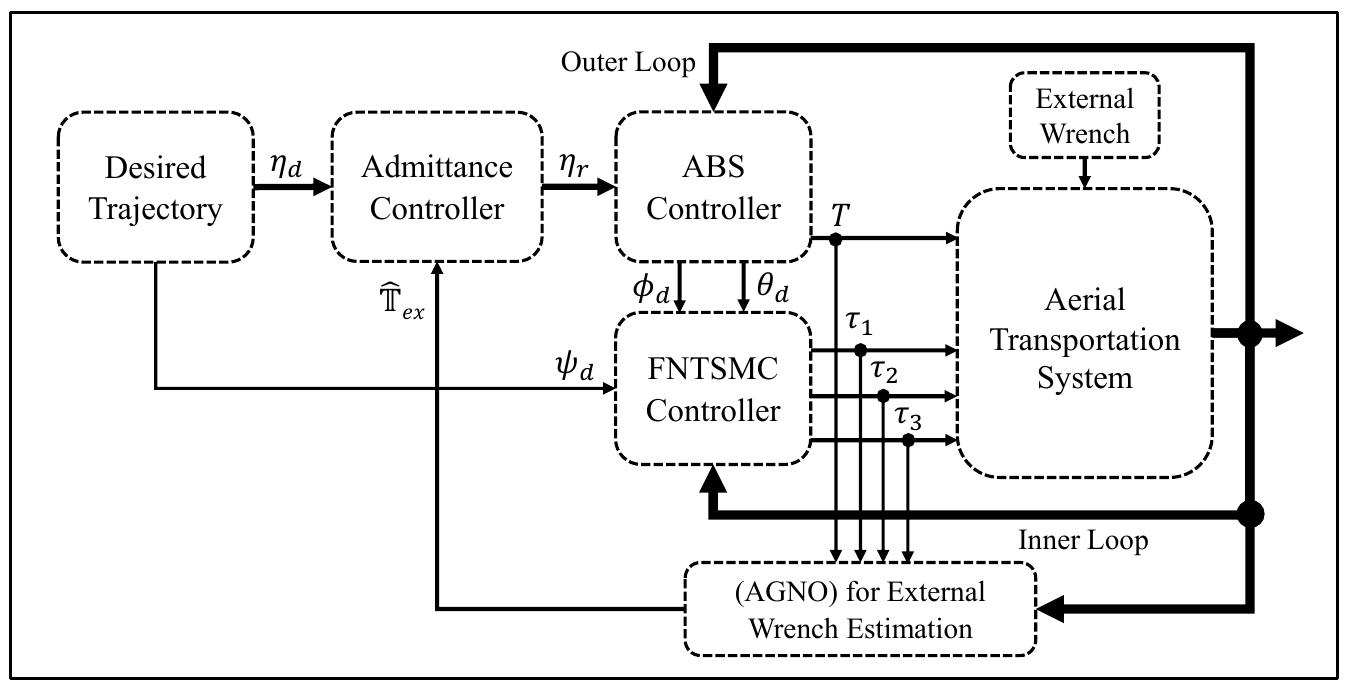}\caption{Control system architecture.}
\label{Fig2:control_SBD}
\end{figure}
While the main focus of this work is the estimation of the external interaction wrenches, a brief description of the control architecture is given below, for more details on the controller design and its stability analysis, see \cite{naser2025aerial, naser2025human}. As illustrated in Figure \ref{Fig2:control_SBD}, the control architecture integrates adaptive backstepping control for position stabilization, Nonsingular Fast Terminal Sliding Mode Control (NFTSMC) for robust attitude corrections, and an admittance controller for natural human-robot interaction. This ensures reliable trajectory tracking, payload stabilization, and responsive human-guided operation. The admittance controller introduces compliance, allowing the UAV to respond adaptively to human-applied forces based on force estimations, enabling precise payload handling \cite{augugliaro2013admittance}.

The control allocation between the total control signals and the $j^{th}$ quadrotor control input is derived as follows:
\begin{equation}
\begin{bmatrix}T\\ \tau\end{bmatrix}=\Gamma u_{q},
\label{eq6:control_allocation}
\end{equation}
where $u_{q}\in\mathbb{R}^{8}$ is the control input vector of both quadrotors:
\begin{equation}
u_{q}=[T_{q_{1}},\tau_{11},\tau_{12},\tau_{13},T_{q_{2}},\tau_{21},\tau_{22},\tau_{23}]^{\top},
\label{eq7:u_o}
\end{equation}
and $\Gamma\in\mathbb{R}^{4\times8}$ is a constant matrix that can be constructed depending on the configuration of the integrated system:
\begin{equation}
\Gamma=\begin{bmatrix}1&0&0&0&1&0&0&0\\ s_{1}(2)&1&0&0&s_{2}(2)&1&0&0\\ -s_{1}(1)&0&1&0&-s_{2}(1)&0&1&0\\ 0&0&0&1&0&0&0&1\end{bmatrix}.
\label{eq8:Gamma_matrix}
\end{equation}
Since the system in \eqref{eq6:control_allocation} is underdetermined, the solution is optimized by minimizing a cost function $\mathfrak{J}(u_{q})$:
\begin{equation}
u_{q}^{*}=\arg\min\{\mathfrak{J}|[T^{des},\tau^{des}]^{\top}=\Gamma u_{q}\},    
\label{eq9:argmin_J}
\end{equation}
where the cost function is:
\begin{equation}
\mathfrak{J}=\sum_{j=1}^{2} d_{j_{1}}T_{q_{j}}^{2}+d_{j_{2}}\tau_{j_{1}}^{2}+d_{j_{3}}\tau_{j_{2}}^{2}+d_{j_{4}}\tau_{j_{3}}^{2}.
\label{eq10:cost_function}
\end{equation}
where $d_{ji}$ are positive constants. Equations \eqref{eq9:argmin_J} and \eqref{eq10:cost_function} can be combined and reformulated as $\mathfrak{J}=\|\mathcal{O}u_{q}\|_{2}^{2}$ where:
\begin{equation}
\mathcal{O}=\sqrt{\text{diag}(d_{11},d_{12},d_{13},d_{14},d_{21},d_{22},d_{23},d_{24})}
\label{eq11:A_matrix_n}
\end{equation}
An optimal control allocation is then computed using the matrices in \eqref{eq8:Gamma_matrix} and \eqref{eq11:A_matrix_n} and utilizing the Moore-Penrose pseudoinverse ($^+$):
\begin{equation}
\begin{aligned}
u_{q}^{*}&=\mathcal{O}^{-1}(\Gamma\mathcal{O}^{-1})^{+}[T^{des},\tau^{des}]^{\top},\\ &=\mathcal{O}^{-2}\Gamma^{\top}(\Gamma\mathcal{O}^{-2}\Gamma^{\top})^{-1}[T^{des},\tau^{des}]^{\top}.
\end{aligned}
\label{eq12:moore_penrose}
\end{equation}
This strategy ensures an optimal distribution of control efforts.

\section{The Proposed Nonlinear Observer Design \label{sec:observer_design}}
For the estimation of the external interaction wrench, we express the dynamic model (\ref{eq5:integrated_dynamics}) in the inertial frame in terms of Euler angles as follows:
\begin{equation}
\begin{aligned}
m\ddot{p}&=R(\xi)T\hat{e}-mg\hat{e}+f_{ex}, \\
J(\xi)\ddot{\xi}&=\tau_I-C_{r}(\xi,\dot{\xi})\dot{\xi}+M_{ex},
\end{aligned}
\label{eq13:euler_dynamics}
\end{equation}
where $R(\xi)$ is the rotation matrix, $\tau_I=R(\xi)\tau$, $C_{r}(\xi,\dot{\xi})$ represents Coriolis and centrifugal terms, and it is given as follows:
\begin{equation}
C_{r}(\xi,\dot{\xi})\dot{\xi}=J\dot{\Theta}(\xi,\dot{\xi})\dot{\xi}+\begin{bmatrix}\Theta(\xi)\dot{\xi}\end{bmatrix}_{\times}\begin{bmatrix}J\Theta(\xi)\dot{\xi}\end{bmatrix},
\label{eq14:C_r_term}
\end{equation}
where $\dot{\Theta}(\xi,\dot{\xi})$ is the time derivative of the matrix in \eqref{eq1:omega_psi_dot}, $[\cdot]_{\times}$ represents the skew symmetric matrix, and $J(\xi)$ is the inertia tensor, which explicitly depends on the attitude $\xi$ such that:
\begin{equation}
J(\xi)=\Theta(\xi)^{\top}I\Theta(\xi) 
  =\begin{bmatrix}J_{11} & J_{12} & J_{13}\\
J_{21} & J_{22} & J_{23}\\
J_{31} & J_{32} & J_{33}
\end{bmatrix}
\label{eq16:inertia_tensor}
\end{equation}
where the internal components of the matrix $J(\xi)$ in \eqref{eq16:inertia_tensor} are given as follows: 
\begin{equation}
\begin{aligned}J_{11} & =I_{xx}\cos^{2}\theta+I_{zz}\sin^{2}\theta\\
J_{12} & =0\\
J_{13} & =(I_{zz}-I_{xx})\cos\phi\sin\theta\cos\theta\\
J_{21} & =J_{12}\\
J_{22} & =I_{yy}\\
J_{23} & =I_{yy}\sin\phi\\
J_{31} & =J_{13}\\
J_{32} & =J_{23}\\
J_{33} & =I_{xx}\cos^{2}\phi\sin^{2}\theta+I_{zz}\cos^{2}\phi\cos^{2}\theta+I_{yy}\sin^{2}\phi
\end{aligned}
\label{eq17:inertia_components}
\end{equation}
The explicit dependence of $J(\xi)$ on $\xi$ is a key aspect of the model, accounting for asymmetric mass distributions and varying payloads that are common in practical aerial system applications.

The dynamic model in (\ref{eq13:euler_dynamics}) can be expressed in a compact form to develop a unified expression for the external interaction wrench:
\begin{equation}
\mathbb{T}_{ex}=\mathbb{M}(\eta)\ddot{\eta}+\mathbb{C}(\eta,\dot{\eta})\dot{\eta}+\mathbb{G}+\mathbb{A}(\eta)u
\label{eq18:compact_dynamics}
\end{equation}
where $\mathbb{T}_{ex}=[f_{ex}^{\top},M_{ex}^{\top}]^{\top}$ is the external interaction wrench, $\eta=[p^{\top},\xi^{\top}]^{\top}$ is the generalized configuration vector, $u=[T,\tau_I^{\top}]^{\top}$ is the control input vector, $\mathbb{M}(\eta)\in\mathbb{R}^{6\times6}$ is the positive definite inertia matrix, $\mathbb{C}(\eta,\dot{\eta})\in\mathbb{R}^{6\times6}$ is the Coriolis matrix, $\mathbb{G}\in\mathbb{R}^{6}$ is the gravity vector and $\mathbb{A}(\eta)\in\mathbb{R}^{6\times4}$ is the matrix of the control input coefficients. These matrices are given as follows:
\begin{equation}
\begin{aligned}\mathbb{M}(\eta)~~= & \begin{bmatrix}m\mathbb{I}_{3\times3} & 0_{3\times3}\\
0_{3\times3} & J(\eta)
\end{bmatrix}, \quad \mathbb{G}~~~= &\begin{bmatrix}mg\hat{e}\\
0
\end{bmatrix},\\
\mathbb{C}(\eta,\dot{\eta})= & \begin{bmatrix}0_{3\times3} & 0_{3\times3}\\
0_{3\times3} & C_{r}(\eta,\dot{\eta})
\end{bmatrix},~\mathbb{A}(\eta)= & \begin{bmatrix}-R(\eta)\hat{e} & 0_{3\times3}\\
0_{3\times1} & -R(\eta)
\end{bmatrix},
\end{aligned}
\label{eq19:matrices_def}
\end{equation}
where $\mathbb{I}_{n\times n}$ and $0_{n\times m}$ are identity and zero matrices, respectively.

\subsection{The Proposed Nonlinear Observer}
Based on \eqref{eq18:compact_dynamics}, a nonlinear observer is proposed to estimate the external interaction wrench as follows:
\begin{equation}
\begin{aligned}
\dot{\hat{\mathbb{T}}}_{ex}= & \quad \mathbb{B}(\eta,\dot{\eta})\big(\mathbb{T}_{ex}-\hat{\mathbb{T}}_{ex}\big),\\
= & -\mathbb{B}(\eta,\dot{\eta})\hat{\mathbb{T}}_{ex}+\mathbb{B}(\eta,\dot{\eta})\Big(\mathbb{M}(\eta)\ddot{\eta}+\mathbb{C}(\eta,\dot{\eta})\dot{\eta}\\
 & \hspace{13.5em}+\mathbb{G}+\mathbb{A}(\eta)u\Big),
\end{aligned}
\label{eq20:observer_design}
\end{equation}
where $\hat{\mathbb{T}}_{ex}=[\hat{f}_{ex}^{\top},\hat{M}_{ex}^{\top}]^{\top}\in\mathbb{R}^{6}$
represents the wrench estimation and $\mathbb{B}(\eta,\dot{\eta})\in\mathbb{R}^{6\times6}$
is a matrix to be designed to guarantee the observer convergence.
 Assuming $\dot{\mathbb{T}}_{ex}=0$ for slowly varying physical interaction wrench, the observer estimation error and its dynamics become:
 \begin{equation}
\begin{aligned}
e&=\mathbb{T}_{ex}-\hat{\mathbb{T}}_{ex}, \\
\dot{e}&=\dot{\mathbb{T}}_{ex}-\dot{\hat{\mathbb{T}}}_{ex}, \\
&=-\mathbb{B}(\eta,\dot{\eta})(\mathbb{T}_{ex}-\hat{\mathbb{T}}_{ex}), \\
&=-\mathbb{B}(\eta,\dot{\eta})e \qquad ~~\implies \quad \dot{e}+\mathbb{B}(\eta,\dot{\eta})e=0.
\label{eq21:error_dynamics}
\end{aligned}
\end{equation}
This error dynamics in \eqref{eq21:error_dynamics} shows that the asymptotic stability of the observer can be guaranteed by an appropriate design of the matrix $\mathbb{B}(\eta,\dot{\eta})$.

\subsection{Acceleration-Free Implementation}
A practical challenge in implementing the proposed observer is the
requirement for acceleration measurements $(\ddot{\eta})$, particularly the angular acceleration $(\ddot{\xi})$, which is typically not directly available from standard sensor suites. To address this issue, we define an auxiliary vector as follows: 
\begin{equation}
\delta=\hat{\mathbb{T}}_{ex}-\Lambda(\dot{\eta}).\label{eq22:auxiliary_vector}
\end{equation}
Differentiating \eqref{eq22:auxiliary_vector} with respect to time and rearranging: 
\begin{equation}
\dot{\hat{\mathbb{T}}}_{ex}=\dot{\delta}+\frac{\partial\Lambda(\dot{\eta})}{\partial\dot{\eta}}\ddot{\eta}.\label{eq23:auxiliary_vector_dot}
\end{equation}
Substituting \eqref{eq22:auxiliary_vector} and \eqref{eq23:auxiliary_vector_dot} into \eqref{eq20:observer_design} results in the following. 
\begin{multline}
\dot{\delta}+\frac{\partial\Lambda(\dot{\eta})}{\partial\dot{\eta}}\ddot{\eta}=-\mathbb{B}(\eta,\dot{\eta})\big(\delta+\Lambda(\dot{\eta})\big)+\mathbb{B}(\eta,\dot{\eta})\\
\Big(\mathbb{M}(\eta)\ddot{\eta}+\mathbb{C}(\eta,\dot{\eta})\dot{\eta}+\mathbb{G}+\mathbb{A}(\eta)u\Big).\label{eq24:observer_auxiliary}
\end{multline}
To eliminate the acceleration term, we choose: 
\begin{equation}
\frac{\partial\Lambda(\dot{\eta})}{\partial\dot{\eta}}=\mathbb{B}(\eta,\dot{\eta})\mathbb{M}(\eta).\label{eq25:lambda_choice}
\end{equation}
Substituting \eqref{eq25:lambda_choice} into \eqref{eq24:observer_auxiliary} and rearranging result in the acceleration-free observer dynamics \eqref{eq26:auxiliary_wrench_observer_dynamics}: 
\begin{equation}
\begin{aligned}
\dot{\delta}~~= & -\mathbb{B}(\eta,\dot{\eta})\delta+\mathbb{B}(\eta,\dot{\eta})\big(\mathbb{C}(\eta,\dot{\eta})\dot{\eta}+\mathbb{G}+\mathbb{A}(\eta)u-\Lambda(\dot{\eta})\big),\\
\hat{\mathbb{T}}_{ex}= & \quad \delta+\Lambda(\dot{\eta}).
\end{aligned}
\label{eq26:auxiliary_wrench_observer_dynamics}
\end{equation}
For practical implementation, we select $\Lambda(\dot{\eta})=k\dot{\eta}$, which implies that:
\begin{equation}
    \mathbb{B}(\eta,\dot{\eta})=k\mathbb{M}^{-1},
    \label{eq27:B_matrix}
\end{equation}
where $k>0$ is the observer gain. This design results in a computationally efficient observer that does not require acceleration measurements, making it suitable for implementation on the aerial platforms with typical sensor configurations.

\section{Stability, Convergence, and Robustness Analysis \label{sec:stability_analysis}}
\subsection{Stability Analysis for Observer Design}
Stability analysis of nonlinear observers is theoretically challenging and typically relies on Lyapunov methods to prove convergence without explicit solutions. While systems with constant inertia allow relatively straightforward analysis, UAV platforms with varying payloads or asymmetric mass distributions introduce time-varying inertial matrices that complicate this process. Nikoobin and Haghighi \cite{nikoobin2009lyapunov} addressed this issue for robotic manipulators by explicitly accounting for inertia variation in their Lyapunov analysis.

\subsection{Assumptions and Preliminaries}

Before proceeding with the stability analysis, we make the following
assumptions. 
\begin{assum}\label{assum:4}The external interaction wrench will not be rapidly varying during the physical interaction between the aerial platform and the human operator, so that $(\dot{\mathbb{T}}_{ex}=0)$ holds true.\end{assum}
\begin{assum}\label{assum:5}The angular velocities in the roll and
pitch directions are kept bounded during the payload transportation mission.\end{assum}
\begin{assum}\label{assum:6}The inertia tensor $J(\eta)$
is positive definite and bounded for all attainable attitudes in $\eta$.\end{assum}
\begin{assum}\label{assum:7}The rate of change of the inertia tensor
is bounded, i.e., $|\dot{J}(\eta)|\leq\gamma$ for some positive
constant $\gamma$.\end{assum}
These assumptions are reasonable for typical human-UAV interaction
scenarios and provide the foundation for our stability analysis.

\subsection{Lyapunov Stability Analysis}
\begin{thm}
\label{theorem:1}Consider the nonlinear estimator of the external
interaction wrench in \eqref{eq26:auxiliary_wrench_observer_dynamics}
and let Assumptions \ref{assum:4} to \ref{assum:7} hold, the design
of $\mathbb{B}(\xi,\dot{\xi})$ in \eqref{eq27:B_matrix}
ensures asymptotic stability of the error dynamics in \eqref{eq21:error_dynamics}
in the sense of Lyapunov such that $\hat{\mathbb{T}}_{ex}\rightarrow \mathbb{T}_{ex}$
over time. 
\end{thm}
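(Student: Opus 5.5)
The proof will rest on a Lyapunov function weighted by the inertia matrix, chosen to exploit the structure $\mathbb{B}=k\mathbb{M}^{-1}$ from \eqref{eq27:B_matrix}, in the spirit of Nikoobin and Haghighi \cite{nikoobin2009lyapunov}. First, under Assumption \ref{assum:4} ($\dot{\mathbb{T}}_{ex}=0$), the acceleration-free observer \eqref{eq26:auxiliary_wrench_observer_dynamics} reproduces exactly the error dynamics \eqref{eq21:error_dynamics}. To see this, differentiate $\hat{\mathbb{T}}_{ex}=\delta+k\dot{\eta}$ and use the choice \eqref{eq25:lambda_choice} together with \eqref{eq18:compact_dynamics}. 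This gives $\dot e=-k\mathbb{M}^{-1}(\eta)e$. By Assumption \ref{assum:6}, $\mathbb{M}(\eta)=\mathrm{diag}(m\mathbb{I}_{3\times3},J(\eta))$ is positive definite and bounded, so there exist constants $0<\lambda_1\le\lambda_2$ with $\lambda_1\mathbb{I}\le\mathbb{M}(\eta)\le\lambda_2\mathbb{I}$ along all attainable attitudes.

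Next I take the candidate $V=\tfrac12 e^{\top}\mathbb{M}(\eta)e$. It is sandwiched between $\tfrac{\lambda_1}{2}\|e\|^2$ and $\tfrac{\lambda_2}{2}\|e\|^2$, so it is positive definite and decrescent. Differentiating gives
\begin{equation*}
\dot V=e^{\top}\mathbb{M}\dot e+\tfrac12 e^{\top}\dot{\mathbb{M}}e=-k\|e\|^2+\tfrac12 e^{\top}\dot{\mathbb{M}}e .
\end{equation*}
The term $-k\|e\|^2$ arises because $\mathbb{M}\mathbb{M}^{-1}=\mathbb{I}$, which is precisely why this weighted function is chosen. In $\dot{\mathbb{M}}$ only the $J(\eta)$ block varies, and it depends on $\xi$ and $\dot\xi$. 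Assumption \ref{assum:5}, on bounded roll and pitch rates, together with the explicit entries \eqref{eq17:inertia_components} justifies Assumption \ref{assum:7}, namely $\|\dot J\|\le\gamma$. Hence $\dot V\le-(k-\gamma/2)\|e\|^2$.

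If the gain satisfies $k>\gamma/2$, then $\dot V\le-\tfrac{2k-\gamma}{\lambda_2}V$. The comparison lemma then gives exponential decay, $\|e(t)\|\le\sqrt{\lambda_2/\lambda_1}\,e^{-(2k-\gamma)t/(2\lambda_2)}\|e(0)\|$. This yields uniform asymptotic (indeed exponential) stability of $e=0$, and so $\hat{\mathbb{T}}_{ex}\to\mathbb{T}_{ex}$.

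The main obstacle is the indefinite term $\tfrac12 e^{\top}\dot{\mathbb{M}}e$ produced by the attitude-dependent inertia. It must be dominated, which requires the gain condition $k>\gamma/2$. This condition is not stated explicitly in the theorem, so I would make it explicit, noting that it always holds for sufficiently large $k$. An alternative that avoids it is the unweighted function $V=\tfrac12 e^{\top}e$, which gives $\dot V=-k e^{\top}\mathbb{M}^{-1}e\le-(k/\lambda_2)\|e\|^2$. Here the bound $\mathbb{M}^{-1}\ge\lambda_2^{-1}\mathbb{I}$ follows from Assumption \ref{assum:6}, and no restriction on $k$ is needed. I would present the weighted analysis as the main argument, because it shows explicitly how the non-constant inertia enters, and give the unweighted version as a confirming remark.
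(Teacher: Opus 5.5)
Your main argument is the paper's own proof. It uses the inertia-weighted function $e^{\top}\mathbb{M}(\eta)e$ (your factor $\tfrac12$ is immaterial), the cancellation $\mathbb{M}\mathbb{B}=k\mathbb{I}$, the bound $|e^{\top}\dot{\mathbb{M}}e|\le\gamma\|e\|^{2}$ from the block structure of $\dot{\mathbb{M}}$, and the resulting condition $2k>\gamma$, which the paper likewise derives inside the proof without stating it in the theorem.

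Your additions are correct and sharpen the paper's argument: the uniform bounds $\lambda_1,\lambda_2$, the explicit exponential rate via the comparison lemma, and the unweighted remark. That remark shows that any $k>0$ suffices once $\mathbb{M}$ is uniformly bounded, so Assumptions~\ref{assum:5} and~\ref{assum:7} are not actually needed.
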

\textbf{Proof:}
Let $V_{e}(e,\eta)=e^{\top}\mathbb{M}(\eta)e$ be a Lyapunov function candidate. This function is positive definite since the inertia matrix $\mathbb{M}(\eta)$ is positive definite. Taking the time derivative and substituting $\mathbb{B}(\eta,\dot{\eta})=k\mathbb{M}^{-1}(\xi)$ yields:
\begin{equation}
\begin{aligned}
\dot{V}_{e}&=\dot{e}^{\top}\mathbb{M}(\eta)e+e^{\top}\dot{\mathbb{M}}(\eta)e+e^{\top}\mathbb{M}(\eta)\dot{e},\\
&=-e^{\top}\mathbb{B}^{\top}(\eta,\dot{\eta})\mathbb{M}(\eta)e+e^{\top}\dot{\mathbb{M}}(\eta)e-e^{\top}\mathbb{M}(\eta)\mathbb{B}(\eta,\dot{\eta})e,\\
&=-2ke^{\top}e+e^{\top}\dot{\mathbb{M}}(\eta)e.
\label{eq28:lyapunov_derivative}
\end{aligned}
\end{equation}
The term $e^{\top}\dot{\mathbb{M}}(\eta)e$ in \eqref{eq28:lyapunov_derivative}
represents the effect of the time-varying inertial matrix. To ensure
stability, we need to show that $\dot{V}_{e}(e,\eta)<0$ for all $e\neq0$. The term $e^{\top}\dot{\mathbb{M}}(\eta)e$ can be bounded following that:
\begin{equation}
|e^{\top}\dot{\mathbb{M}}(\eta)e|\leq|e|^{2}|\dot{\mathbb{M}}(\eta)|,\label{equ29:inertia_bound_inquality}
\end{equation}
where $|\dot{\mathbb{M}}(\eta)|$ represents the matrix norm of $\dot{\mathbb{M}}(\eta)$. From the structure of the inertia matrix $\mathbb{M}(\eta)$ in \eqref{eq19:matrices_def},
its time derivative is given as: 
\begin{equation}
\dot{\mathbb{M}}(\eta)=\begin{bmatrix}0_{3\times3} & 0_{3\times3}\\
0_{3\times3} & \dot{J}(\eta)
\end{bmatrix},\label{equ30:inertia_matrix_dot}
\end{equation}
Given Assumption \ref{assum:5}, we have $|\dot{J}(\eta)|\leq\gamma$,
which implies $|\dot{\mathbb{M}}(\eta)|\leq\gamma$, therefore: 
\begin{equation}
|e^{\top}\dot{\mathbb{M}}(\eta)e|\leq\gamma|e|^{2}\label{equ31:inertia_matrix_inquality},
\end{equation}
The result in \eqref{equ31:inertia_matrix_inquality} gives us: 
\begin{equation}
\dot{V}_{e}(e,\eta)\leq-2k|e|^{2}+\gamma|e|^{2}=-(2k-\gamma)|e|^{2},\label{equ32:lyapunov_dot_final}
\end{equation}
For stability, we need $\dot{V}_{e}(e,\eta)<0$, which is satisfied
when: 
\begin{equation}
2k>\gamma.\label{equ33:gain_inquality}
\end{equation}
The condition in \eqref{equ33:gain_inquality} provides a design guideline for selecting the observer
gain $k$. It should be chosen such that $k>\gamma/2$, where $\gamma$
is the bound on the rate of change of the inertia tensor.

\subsection{Convergence Properties}

When the condition $2k>\gamma$ is satisfied, the error dynamics
is asymptotically stable, meaning that $\lim_{t\to\infty}e(t)=0$.
This implies that the estimated interaction wrench $\hat{\mathbb{T}}_{ex}$
converges to the actual interaction wrench $\mathbb{T}_{ex}$ over
time. The convergence rate is determined by the eigenvalues of the
matrix $\mathbb{B}(\eta,\dot{\eta})$. For our design with $\mathbb{B}(\eta,\dot{\eta})=k\mathbb{M}^{-1}(\eta)$,
the convergence rate is characterized by the time constant $\tau_{c}$
such that: 
\begin{equation}
\tau_{c}=\frac{1}{k\lambda_{min}(\mathbb{M}^{-1}(\eta))}=\frac{\lambda_{max}(\mathbb{M}(\eta))}{k},\label{equ34:convergence_time_const}
\end{equation}
where $\lambda_{min}(\cdot)$ and $\lambda_{max}(\cdot)$ denote the
minimum and maximum eigenvalues, respectively. This analysis shows
that higher values of $k$ lead to faster convergence, but there is
a trade-off with noise sensitivity. In practice, the gain $k$ should
be chosen to balance the convergence speed and the robustness to measurement
noise.

\subsection{Robustness Analysis}

In practical scenarios, the system model may contain uncertainties
and external disturbances. Let us denote the actual system dynamics
as follows: 
\begin{equation}
\mathbb{T}_{ex}=\mathbb{M}(\eta)\ddot{\eta}+\mathbb{C}(\eta,\dot{\eta})\dot{\eta}+\mathbb{G}+\mathbb{A}(\eta)u+\Delta(\eta,\dot{\eta}),\label{equ35:system_dynamics_with_uncertainties}
\end{equation}
where $\Delta(\eta,\dot{\eta})$ represents the combined effect of model
uncertainties and external disturbances. According to \eqref{equ35:system_dynamics_with_uncertainties}, the observer error dynamics
in \eqref{eq21:error_dynamics} now becomes: 
\begin{equation}
\begin{aligned}\dot{e} & =-\mathbb{B}(\eta,\dot{\eta})e+\dot{\mathbb{T}}_{ex}-\mathbb{B}(\eta,\dot{\eta})\Delta(\eta,\dot{\eta}),\\
 & =-\mathbb{B}(\eta,\dot{\eta})e-\mathbb{B}(\eta,\dot{\eta})\Delta(\eta,\dot{\eta}).
\end{aligned}
\label{equ36:error_dynamics_with_uncertainties}
\end{equation}
Since $(\dot{\mathbb{T}}_{ex}=0$ according to Assumption \ref{assum:4}. Using the Lyapunov function in \eqref{eq28:lyapunov_derivative}, the time derivative becomes: 
\begin{equation}
\dot{V}_{e}(e,\eta)=-2ke^{\top}e+e^{\top}\dot{\mathbb{M}}(\eta)e-2ke^{\top}\Delta(\eta,\dot{\eta}).
\label{eq37:disturped_lyapunov_dot}
\end{equation}
Assuming bounded $\Delta(\eta,\dot{\eta})$, for some constants $\varepsilon >0$ i.e., $\|\Delta(\eta,\dot{\eta})\|\leq \varepsilon$,
it can be shown that the observer error remains bounded, with the bound depending on the magnitude of this term. This robustness analysis demonstrates that the proposed nonlinear observer maintains bounded estimation errors even in the presence of model uncertainties and external disturbances, a critical property for practical implementation in real-world human-UAV interaction scenarios.

\subsection{Adaptive Gain Design}

To further improve the performance of the nonlinear observer, an adaptive gain design is proposed such that: 
\begin{equation}
\mathbb{B}(\eta,\dot{\eta})=\mathbb{K}(\eta,\dot{\eta})\mathbb{M}^{-1}(\eta),
\label{eq38:adaptivegain_design}
\end{equation}
where $\mathbb{K}(\eta,\dot{\eta})$ is a positive definite gain matrix
that adapts based on the system state such that:
\begin{equation}
\mathbb{K}(\eta,\dot{\eta})=k_{0}\mathbb{I}_{6\times6}+k_{1}\|\dot{\eta}\|\mathbb{I}_{6\times6}+k_{2}\|\mathbb{M}(\eta)\|\mathbb{I}_{6\times6},
\label{eq39:adaptive_gain_matrix}
\end{equation}
where $k_{0},k_{1},k_{2}>0$ are scalar gains and $\|\cdot\|$ denotes
the norm of a vector or a matrix. This adaptive gain design ensures
that:
\begin{enumerate}
\item The observer gain increases with higher velocities, providing faster
convergence during dynamic motions. 
\item The observer gain adapts to changes in the inertial properties, maintaining
stability across different configurations. 
\item A minimum gain $k_{0}$ ensures that basic stability properties are maintained
at all times.
\end{enumerate}

\subsection{Integration with Control Frameworks}

The estimated interaction wrench $\hat{\mathbb{T}}_{ex}$ is fed to the admittance controller to generate the reference trajectories for the aerial platform as follows:
\begin{equation}
M_{a}(\ddot{\eta}_{d}-\ddot{\eta}_{r})+B_{a}(\dot{\eta}_{d}-\dot{\eta}_{r})+K_{a}(\eta_{d}-\eta_{r})=\hat{\mathbb{T}}_{ex},
\label{eq40:admittance_control}
\end{equation}
where $\eta_{d}$ is the desired trajectory (position and attitude),
$\eta_{r}$ is the reference trajectory generated
by the admittance controller based on $\hat{\mathbb{T}}_{ex}$, and $M_{a},B_{a},K_{a}$ are the virtual inertia, damping,
and stiffness matrices that define the interaction behavior. This
admittance control framework allows the UAV to respond compliantly
to the human-applied forces, creating an intuitive interaction experience.

\section{Simulation Results and Discussion \label{sec:simulation_results}}
\subsection{Simulation Setup}
The aerial cooperative payload transportation system, featuring two quadrotors to lift and transport the payload in response to the human-applied guidance. The system and AGNO observer were simulated using MATLAB (2024b). The integrated system was modeled as a single aerial vehicle with parameters detailed in Table \ref{tab:system_parameters}.

\begin{table}[htbp]
\caption{System and Observer parameters.}
\begin{center}
\begin{tabular}{|c||c||c|}
\hline
\textbf{Symbol} & \textbf{Definition} & \textbf{Value/Unit} \\
\hline \hline
$m$ & Total mass & $3.9/\text{kg}$ \\ \hline
$L_{p}$ & Payload length & $2/\text{m}$ \\ \hline
$g$ & Gravitational acceleration & $9.81/\text{m/s}^{2}$ \\ \hline
$I_{xx}$ & Moment of inertia in x-axis & $3.227/\text{kg~m}^{2}$ \\ \hline
$I_{yy}$ & Moment of inertia in y-axis & $0.061/\text{kg~m}^{2}$ \\ \hline
$I_{zz}$ & Moment of inertia in z-axis & $3.277/\text{kg~m}^{2}$ \\ \hline
$k_0$ & Observer gain & $0.78/-$ \\ \hline
$k_1$ & Observer gain & $0.3/-$ \\ \hline
$k_2$ & Observer gain & $0.35/-$ \\ \hline
$T$ & Maximum UAV thrust & $35/\text{N}$ \\ \hline
$t_{s}$ & Control timestep & $0.01/\text{s}$ \\ \hline
$M_{a}$ & Virtual inertial matrix & $0.95\mathbb{I}_{n\times n}/\text{kg}$ \\ \hline
$B_{a}$ & Virtual damping & $1.54\mathbb{I}_{n\times n}/\text{N~s/m}$ \\ \hline
$K_{a}$ & Virtual spring & $0\mathbb{I}_{n\times n}/\text{N~s/m}$ \\
\hline
\end{tabular}
\label{tab:system_parameters}
\end{center}
\end{table}

\subsection{Simulation Results}
During simulation, the system successfully lifted and transported the payload in response to human guidance. It followed complex 3D trajectories while maintaining stable tracking and attitude control, with initial errors rapidly converged to zero due to accurate estimation of the applied forces and torques, as shown in Figure ~\ref{Fig3:main_matlab}. The arrows indicate the magnitude and direction of the external interaction forces and torques that guided the payload from the initial position to the final destination.

\begin{figure}[b!]
\centering \includegraphics[clip,width=0.9\columnwidth]{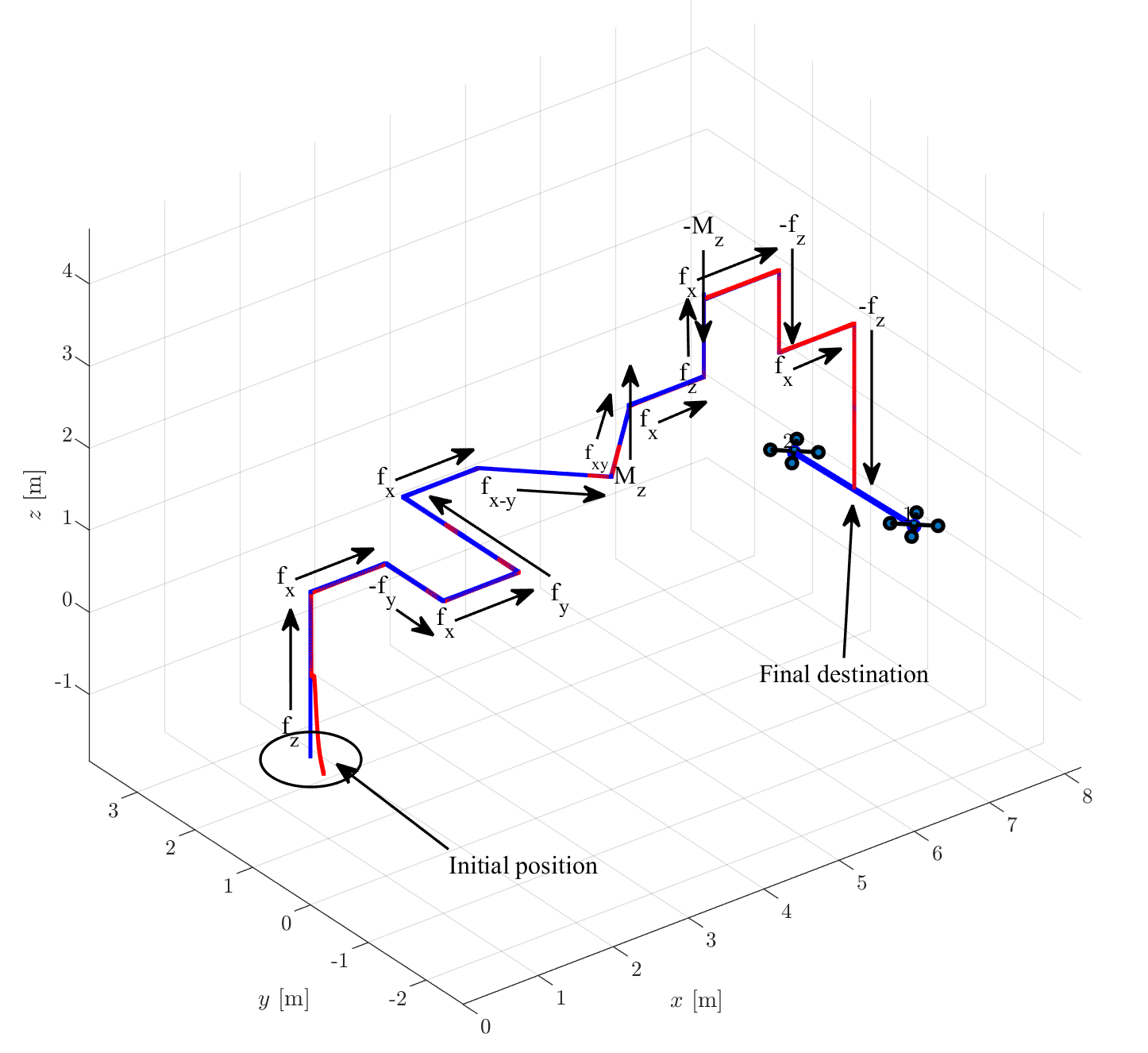}
\caption{The simulation environment; MATLAB simulation of the aerial payload
transport system's motion following human guidance in 3D space as
depicted by the arrows (magnitudes and directions).}
\label{Fig3:main_matlab} 
\end{figure}

The proposed AGNO demonstrated precise estimations of the external interaction wrenches, with force and torque errors approaching zero, as shown in Figures~\ref{Fig4:force_estimation} and~ \ref{Fig5:force_estimation_error}. The estimated and actual wrenches are closely aligned across all axes, confirming the observer's effectiveness in capturing the interaction dynamics. Combined with admittance control, this enables natural and stable human-UAV physical collaboration.

\begin{figure}[b!]
\centering \includegraphics[clip,width=0.9\columnwidth]{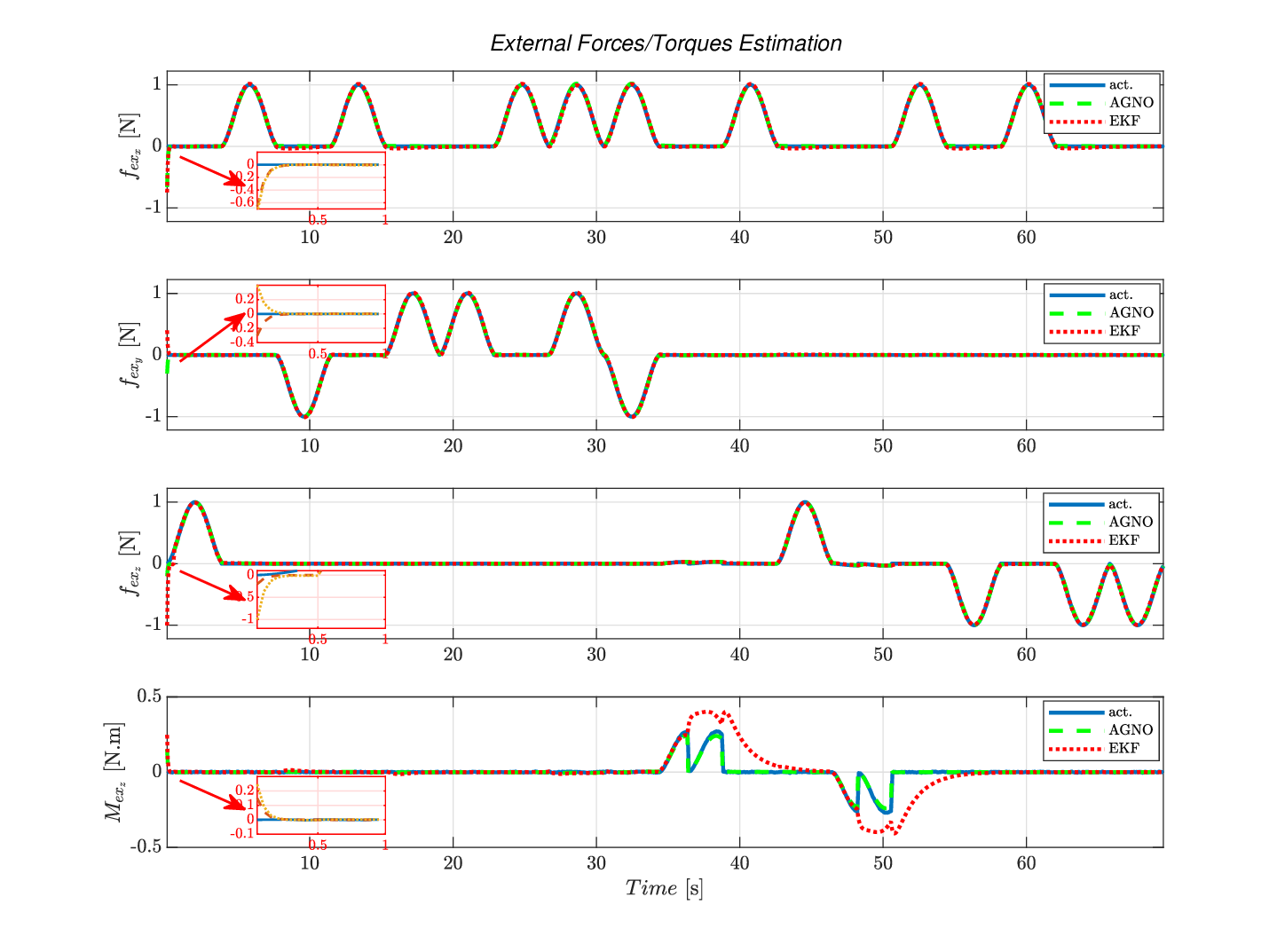}
\caption{The actual and estimated forces in $(x,y,z)$ directions and torque
around $z$ axis.}
\label{Fig4:force_estimation} 
\end{figure}

\begin{figure}[bt!]
\centering \includegraphics[clip,width=0.9\columnwidth]{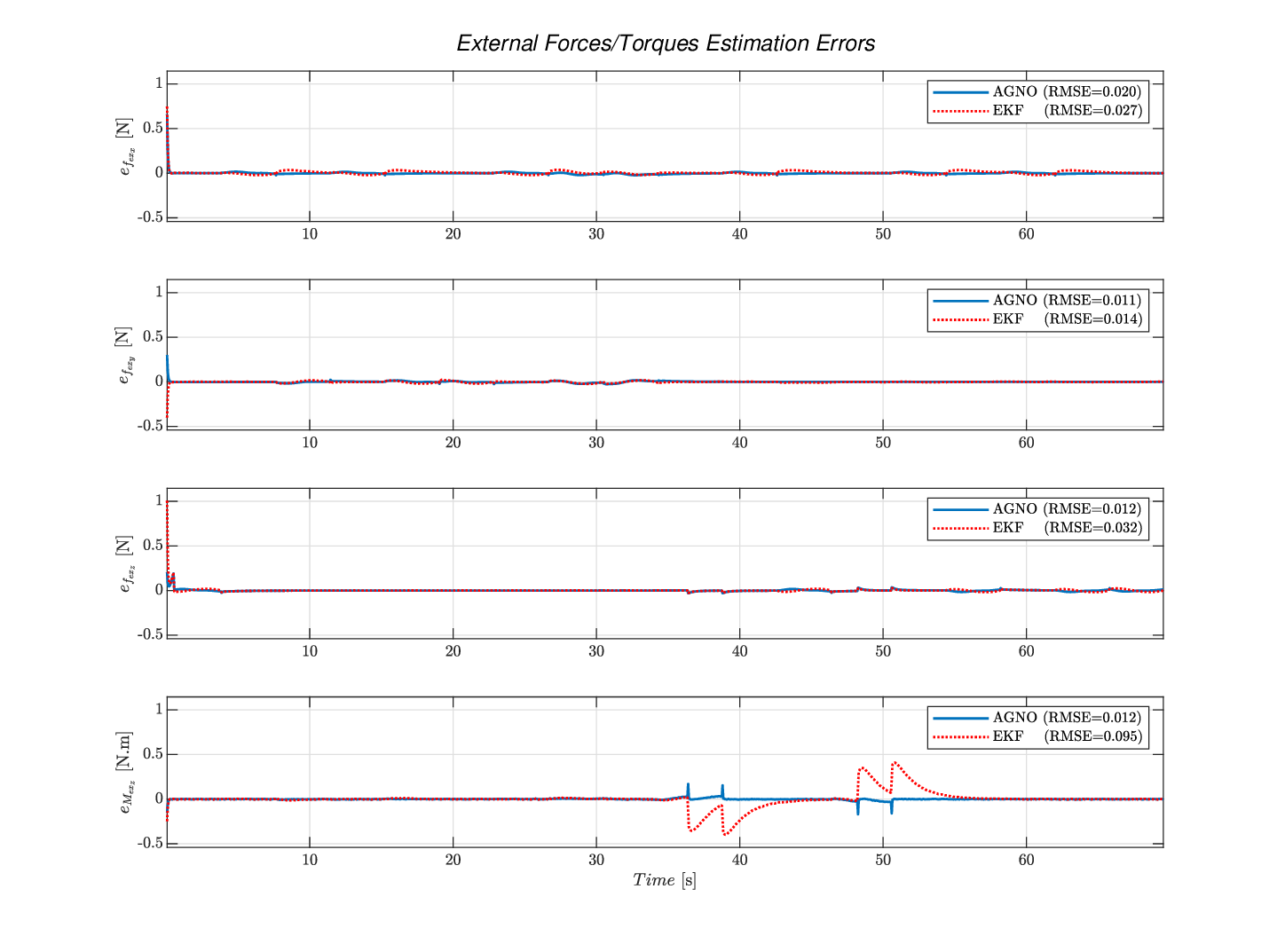}
\caption{Estimation errors of the external interaction forces in $(x,y,z)$
directions and torque around $z$ axis.}
\label{Fig5:force_estimation_error} 
\end{figure}

Figure~\ref{Fig5:force_estimation_error} further quantifies this performance over the 70-seconds simulation. Low estimation errors were maintained during rapid initialization (0-2 seconds), direction reversals (8-12 seconds), bidirectional motion (32-36 seconds), and coupled force-torque interactions (43-56 seconds), highlighting the robustness of the proposed observer under highly nonlinear dynamics and interaction conditions. This eliminates the need for bulky and expensive force sensors, reducing system weight and cost, and enabling natural and intuitive human-UAV physical interaction at any point of contact.

To benchmark the performance of the proposed AGNO, an Extended Kalman Filter (EKF) was implemented under the same conditions. Although the EKF provided reasonable force estimation during smooth motion, its performance degraded significantly during rapid direction changes and coupled force-torque interactions, particularly in torque estimation, where the limitations of model linearization became evident. Quantitatively, the proposed AGNO outperformed the EKF in terms of RMSE for both force and torque estimation. In particular, it achieved RMSE values of (0.020, 0.011, 0.012, and 0.012) compared to (0.027, 0.014, 0.032, and 0.095) for the EKF in estimating the external forces along the $(x,y,z)$ directions and the torque around the $z$-axis, respectively, as shown in Figure~\ref{Fig5:force_estimation_error}. These results demonstrate the advantage of the proposed AGNO in handling the strong nonlinearities of the system dynamics without relying on local linearization, resulting in more reliable and accurate wrench estimation than the EKF.

\section{Conclusion \label{sec:conclusion}}
This paper presented an Adaptive Gain Nonlinear Observer (AGNO) for estimating the external interaction wrench in human-UAV physical interaction. By explicitly accounting for the time-varying inertial matrix, the proposed AGNO maintains high estimation accuracy during dynamic maneuvers and payload variations. The Lyapunov-based analysis provided theoretical guarantees of convergence and robustness. Benchmarking against an Extended Kalman Filter (EKF) demonstrated that the proposed AGNO achieved lower RMSE values in force estimation and a superior torque estimation, particularly under highly nonlinear interaction conditions where EKF performance degrades due to linearization. The proposed method enables intuitive and safe interaction without force-torque sensors, reducing system weight, cost, and complexity while preserving system performance. This approach supports advanced applications such as collaborative aerial transportation and manipulation.

\bibliographystyle{IEEEtran}
\bibliography{references}

@article{laghari2023unmanned,
  title={Unmanned aerial vehicles: A review},
  author={Laghari, Asif Ali and Jumani, Awais Khan and Laghari, Rashid Ali and Nawaz, Haque},
  journal={Cognitive Robotics},
  volume={3},
  pages={8--22},
  year={2023},
  publisher={Elsevier}
}

@article{naser2025aerial,
  title={Aerial assistive payload transportation using quadrotor UAVs with nonsingular fast terminal SMC for human physical interaction},
  author={Naser, Hussein N and Hashim, Hashim A and Ahmadi, Mojtaba},
  journal={Results in Engineering},
  volume={25},
  pages={103701},
  year={2025},
  publisher={Elsevier}
}

@article{naser2026QUKF,
title = {Quaternion-based unscented Kalman filter for robust wrench estimation of human-UAV physical interaction},
journal = {Signal Processing},
volume = {245},
pages = {110582},
year = {2026},
publisher={Elsevier},
issn = {0165-1684},
doi = {https://doi.org/10.1016/j.sigpro.2026.110582},
url = {https://www.sciencedirect.com/science/article/pii/S0165168426000964},
author = {Hussein N. Naser and Hashim A. Hashim and Mojtaba Ahmadi}
}

@inproceedings{hogan1984impedance,
  title={Impedance control: An approach to manipulation},
  author={Hogan, Neville},
  booktitle={1984 American control conference},
  pages={304--313},
  year={1984},
  organization={IEEE}
}

@inproceedings{mariotti2019admittance,
  title={Admittance control for human-robot interaction using an industrial robot equipped with a F/T sensor},
  author={Mariotti, Eleonora and Magrini, Emanuele and De Luca, Alessandro},
  booktitle={2019 International Conference on Robotics and Automation (ICRA)},
  pages={6130--6136},
  year={2019},
  organization={IEEE}
}

@inproceedings{augugliaro2013admittance,
  title={Admittance control for physical human-quadrocopter interaction},
  author={Augugliaro, Federico and D'Andrea, Raffaello},
  booktitle={2013 European Control Conference (ECC)},
  pages={1805--1810},
  year={2013},
  organization={IEEE}
}

@article{ruggiero2018aerial,
  title={Aerial manipulation: A literature review},
  author={Ruggiero, Fabio and Lippiello, Vincenzo and Ollero, Anibal},
  journal={IEEE Robotics and Automation Letters},
  volume={3},
  number={3},
  pages={1957--1964},
  year={2018},
  publisher={IEEE}
}

@article{lee2018integrated,
  title={An integrated framework for cooperative aerial manipulators in unknown environments},
  author={Lee, Hyeonbeom and Kim, Hyoin and Kim, Woojin and Kim, H Jin},
  journal={IEEE Robotics and Automation Letters},
  volume={3},
  number={3},
  pages={2307--2314},
  year={2018},
  publisher={IEEE}
}

@article{rajappa2017design,
  title={Design and implementation of a novel architecture for physical human-UAV interaction},
  author={Rajappa, Sujit and B{\"u}lthoff, Heinrich and Stegagno, Paolo},
  journal={The International Journal of Robotics Research},
  volume={36},
  number={5-7},
  pages={800--819},
  year={2017},
  publisher={SAGE Publications Sage UK: London, England}
}

@inproceedings{rajappa2017control,
  title={A control architecture for physical human-UAV interaction with a fully actuated hexarotor},
  author={Rajappa, Sujit and B{\"u}lthoff, Heinrich H and Odelga, Marcin and Stegagno, Paolo},
  booktitle={2017 IEEE/RSJ International Conference on Intelligent Robots and Systems (IROS)},
  pages={4618--4625},
  year={2017},
  organization={IEEE}
}

@inproceedings{skrede2024linear,
  title={A Linear Discrete Kalman Filter to Estimate the Contact Wrench of an Unknown Robot End Effector},
  author={Skrede, Aleksander},
  booktitle={2024 IEEE International Conference on Real-time Computing and Robotics (RCAR)},
  pages={341--347},
  year={2024},
  organization={IEEE}
}

@inproceedings{yin2023error,
  title={Error-State Kalman Filter Based External Wrench Estimation for MAVs Under a Cascaded Architecture},
  author={Yin, Yuhan and Yang, Qingkai and Fang, Hao},
  booktitle={2023 IEEE/RSJ International Conference on Intelligent Robots and Systems (IROS)},
  pages={5019--5026},
  year={2023},
  organization={IEEE}
}

@article{banks2021physical,
  title={Physical human-UAV interaction with commercial drones using admittance control},
  author={Banks, Christopher and Bono, Antonio and Coogan, Samuel},
  journal={IFAC-PapersOnLine},
  volume={54},
  number={20},
  pages={258--264},
  year={2021},
  publisher={Elsevier}
}

@article{wilmsen2019nonlinear,
  title={Nonlinear wrench observer design for an aerial manipulator},
  author={Wilmsen, Marek and Yao, Chao and Schuster, Micha and Li, Shixiong and Janschek, Klaus},
  journal={IFAC-PapersOnLine},
  volume={52},
  number={22},
  pages={1--6},
  year={2019},
  publisher={Elsevier}
}

@article{nikoobin2009lyapunov,
  title={Lyapunov-based nonlinear disturbance observer for serial n-link robot manipulators},
  author={Nikoobin, Amin and Haghighi, Reza},
  journal={Journal of Intelligent and Robotic Systems},
  volume={55},
  pages={135--153},
  year={2009},
  publisher={Springer}
}

@article{veil2021nonlinear,
  title={Nonlinear disturbance observers for robotic continuum manipulators},
  author={Veil, Carina and Mueller, Daniel and Sawodny, Oliver},
  journal={Mechatronics},
  volume={78},
  pages={102518},
  year={2021},
  publisher={Elsevier}
}

@article{kruvzic2021end,
  title={End-effector force and joint torque estimation of a 7-DoF robotic manipulator using deep learning},
  author={Kru{\v{z}}i{\'c}, Stanko and Musi{\'c}, Josip and Kamnik, Roman and Papi{\'c}, Vladan},
  journal={Electronics},
  volume={10},
  number={23},
  pages={2963},
  year={2021},
  publisher={MDPI}
}

@inproceedings{dai2019learning,
  title={Learning-based external wrench estimation for quadrotors},
  author={Dai, Yi-Wei and Ye, Wei-Yuan and Pi, Chen-Huan and Cheng, Stone},
  booktitle={2019 International Conference on Advanced Mechatronic Systems (ICAMechS)},
  pages={245--249},
  year={2019},
  organization={IEEE}
}

@inproceedings{alharbat2025external,
  title={External-Wrench Estimation for Aerial Robots Exploiting a Learned Model},
  author={Alharbat, Ayham and Ruscelli, Gabriele and Diversi, Roberto and Mersha, Abeje},
  booktitle={2025 International Conference on Unmanned Aircraft Systems (ICUAS)},
  pages={323--331},
  year={2025},
  organization={IEEE}
}

@inproceedings{yuksel2014nonlinear,
  title={A nonlinear force observer for quadrotors and application to physical interactive tasks},
  author={Y{\"u}ksel, Burak and Secchi, Cristian and B{\"u}lthoff, Heinrich H and Franchi, Antonio},
  booktitle={2014 IEEE/ASME international conference on advanced intelligent mechatronics},
  pages={433--440},
  year={2014},
  organization={IEEE}
}

@INPROCEEDINGS{naser2025human,
  author={Naser, Hussein N. and Hashim, Hashim A. and Ahmadi, Mojtaba},
  booktitle={2025 American Control Conference (ACC)}, 
  title={Human Physical Interaction based on UAV Cooperative Payload Transportation System using Adaptive Backstepping and FNTSMC}, 
  year={2025},
  volume={},
  number={},
  pages={450-456},
  doi={10.23919/ACC63710.2025.11107760}}

@article{hashim2023observer,
	title={{O}bserver-based {C}ontroller for {VTOL}-{UAV}s {T}racking using {D}irect {V}ision-Aided {I}nertial {N}avigation {M}easurements},
author={Hashim, Hashim A and Eltoukhy, Abdelrahman EE and Odry, Akos},
journal={ISA transactions},
volume={137},
pages={133--143},
year={2023},
publisher={Elsevier}
}

@inproceedings{tomic2014unified,
  title={A unified framework for external wrench estimation, interaction control and collision reflexes for flying robots},
  author={Tomi{\'c}, Teodor and Haddadin, Sami},
  booktitle={2014 IEEE/RSJ international conference on intelligent robots and systems},
  pages={4197--4204},
  year={2014},
  organization={IEEE}
}

\end{document}